\DeclareMathOperator*{\argmin}{arg\,min}
\newcommand{\vu}{\mathbf{u}}
\newcommand{\vb}{\mathbf{b}}
\newcommand{\vy}{\mathbf{y}}
\newcommand{\vr}{\mathbf{r}}
\newcommand{\vx}{\mathbf{x}}
\newcommand{\vwx}{\mathbf{x}}
\newcommand{\vwy}{{\mathbf{y}}}
\newcommand{\vm}{\mathbf{m}}
\newcommand{\vs}{\mathbf{s}}
\newcommand{\bX}{\mathbf{X}}
\newcommand{\bwX}{\bm{\mathcal{X}}}
\newcommand{\bwY}{\bm{\mathcal{Y}}}
\newcommand{\bY}{\mathbf{Y}}
\newcommand{\vW}{\mathbf{W}}
\newcommand{\vM}{\mathbf{M}}
\newcommand{\vY}{\mathbf{Y}}
\newcommand{\vX}{\mathbf{X}}
\newcommand{\vA}{\mathbf{A}}
\newcommand{\vS}{\mathbf{S}}
\newcommand{\vG}{\mathbf{G}}
\newcommand{\vP}{\mathbf{P}}
\newcommand{\vD}{\mathbf{D}}
\newcommand{\vI}{\mathbf{I}}
\newcommand{\bGam}{\bm{\Gamma}}
\newcommand{\bSig}{\bm{\Sigma}}
\newcommand{\bmu}{\bm{\mu}}
\newcommand{\bTheta}{\bm{\Theta}}
\newcommand{\bPi}{\bm{\Pi}}
\newcommand{\bPhi}{\bm{\Phi}}
\newcommand{\bLambd}{\bm{\Lambda}}
\newtheorem{theorem}{Theorem}
\def\CLW{0.5mm}  
\def\FSLW{0.4mm}  
\def\RSLW{0.5mm}  
\newif\ifsparse
\title{Blind Bounded Source Separation Using Neural Networks with Local Learning Rules}
\begin{document}
%
\maketitle
\begin{abstract}
An important problem encountered by both natural and engineered signal processing systems is blind source separation. In many instances of the problem, the sources are bounded by their nature and known to be so, even though the particular bound may not be known. To separate such bounded sources from their mixtures, we propose a new optimization problem, Bounded Similarity Matching (BSM). 
A principled derivation of an adaptive BSM algorithm leads to a recurrent neural network with a clipping nonlinearity. The network adapts by local learning rules, satisfying an important constraint for both
biological plausibility and implementability in neuromorphic
hardware. 
\end{abstract}
\begin{keywords}
Similarity Matching, Recurrent Neural Networks, Local Update Rule, Blind Source Separation, Bounded Component Analysis.
\end{keywords}
\section{Introduction}
\label{sec:intro}

Blind source separation is a fundamental problem for natural and engineered signal processing systems. In this paper, we show how it can be solved by a class of neural networks important for both neuroscience and machine learning, i.e. those with local learning rules, where the strength of a synapse is updated based on the activations of only the pre- and postsynaptic neurons. Locality of learning rules is a natural constraint for  biological neural networks \cite{dayan2005theoretical, kandel2000principles}, and enables large scale implementations of neural network algorithms with low power in recent neuromorphic integrated circuits \cite{lin2018programming}.

Similarity Matching has  been introduced as a gradient-based optimization framework   for  principled derivation of neural networks with local learning rules \cite{pehlevan2015normative, pehlevanSPMtoappear}. This framework can be used to provide solutions for clustering \cite{pehlevan2014hebbian,bahroun2017neural}, sparse feature extraction \cite{ pehlevan2014hebbian} and manifold learning \cite{sengupta2018manifold}. It was also applied to a nonnegative blind source separation problem \cite{pehlevan2017blind} which leads to a recurrent neural network with ReLU activation functions. 

Similar to nonnegativity, spatial boundedness is a property that can be exploited to separate sources from their linear mixtures. Bounded Component Analysis (BCA) has been introduced as a framework exploiting this property to separate both dependent and independent sources from their linear mixtures \cite{cruces2010bounded,erdogan2013class}.  It was successfully applied to separation of dependent natural images, digital communication signals, as well as sparse bounded sources \cite{erdogan2013class, babatas2018algorithmic, babatas2018sparse}.

This article has two main contributions. First, we formulate a new optimization problem, bounded similarity matching (BSM), for blind bounded source separation (BBSS). By using diagonally weighted inner products and bounded outputs, we show that the BBSS problem can be formulated as a minimization of the inner product weights under a weighted and bounded similarity matching constraint. Second, by an online optimization of this problem, we derive a  biologically plausible recurrent neural network with clipping activation functions, whose parameters are updated by a local learning rule. The update rules of synaptic weights parallel the plasticity mechanisms observed in biological neurons.

\section{BBSS Setting}
\label{sec:BSSseting}
We consider the following BBSS scenario:
\begin{itemize}[leftmargin=*]
	\item There are $d$ sources, represented by the vector sequence $\{\vs_t \in \mathbb{R}^d, t\in \mathbb{Z}^+\}$.
	Sources are uncorrelated, 
	\begin{align}
	\bSig_{\vs} &=E((\vs_t -\bmu_\vs)(\vs_t-\bmu_\vs)^{\top})\nonumber \\
	&=\text{diag}(	\sigma_{s^{(1)}}^2,	\sigma_{s^{(2)}}^2, \ldots, 	\sigma_{s^{(d)}}^2),
	\end{align}
	where $\bmu_{\vs}=E(\vs)$,  and bounded around their mean
	\begin{eqnarray}
	\bmu_{\vs}-\vr/2 \le \vs_t \le \bmu_\vs+\vr/2, \hspace{0.3in} \forall t \in \mathbb{Z}^+,
	\end{eqnarray}
	where $\vr=\left[r^{(1)}, r^{(2)}, \ldots,r^{(d)}\right]^{\top}$, $r^{(i)}$ the range of source $s^{(i)}$. 
	The bounds are unknown to the BBSS algorithm.
	
	
	\item Sources are mixed through a linear memoryless system,
$\vm_t=\vA \vs_t$, $ \forall t \in \mathbb{Z}^+$,
	where $\vA \in \mathbb{R}^{q \times d}$ is the full rank mixing matrix, and $\vm_t \in \mathbb{R}^q$ are mixtures. We assume an (over)determined mixing, i.e., $q\ge d$. We define $\bmu_{\vm} = E(\vm_t)$.
	
	\item The mixtures are pre-processed with whitening and mean-removal, which can be implemented through various adaptive mechanisms including biologically plausible neural networks, see e.g. \cite{pehlevan2015normative}. The pre-processed mixtures can be written as
	\begin{eqnarray}
	\hspace*{-0.1in}\vx_t=\vW_{pre}(\vm_t-\bmu_{\vm})= \bTheta \bSig_{\vs}^{-1/2}(\vs_t-\bmu_{\vs})= \bTheta \bar{\vs}_t, 
	\end{eqnarray}
	where $\bar{\vs}_t$ is the scaled and mean-removed sources that satisfy
	$ E(\bar{\vs})=\mathbf{0}$, and 
	$E(\bar{\vs}\bar{\vs}^T)=\mathbf{I}$.
	Furthermore, $\bar{\vs}_t$ is symmetrically bounded around zero satisfying
	$-\vb \le \bar{\vs}_t\le \vb$,
	where $\vb=\bSig_{\vs}^{-1/2}\frac{\vr}{2}$. Here, note that since $\sigma_{s^{(i)}}\le\frac{r^{(i)}}{2}$, bounds $b^{(i)}$'s are greater than or equal to  $1$.  $\bTheta$ is a real orthogonal matrix satisfying
	$\bTheta^T\bTheta=\vI$ and $\vx_t$, like $\bar{\vs}_t$, is a white signal with unit variance.
	
\end{itemize}


\section{BSM: A New Optimization Formulation of BBSS}

Our solution to BBSS takes inspiration from the Nonnegative Similarity Matching (NSM) method given in \cite{pehlevan2017blind}, which proposed the following optimization problem to recover nonnegative sources from their whitened mixtures:
\begin{eqnarray}
\vY^*=\argmin_{ \vY \ge \mathbf{0}} \|\vX^{\top}\vX -\vY^{\top}\vY \|_F^2,
\end{eqnarray}
where $\bX :=\left[\vwx_1, \vwx_2,\ldots,  \vwx_{t-1}, \vwx_t \right]$ is the data matrix and $\bY :=\left[\vwy_1,\vwy_2, \ldots,  \vwy_{t-1}, \vwy_t \right]$ is the output matrix. NSM amounts to finding nonnegative $\vy_n$ vectors which have the same decorrelated and variance-equalized structure as whitened-mixtures. 

If we were to adopt a similar approach for bounded but potentially mixed-sign sources, we could instead propose
\begin{eqnarray}
\vY^*=\argmin_{ -\vb \le \vY \le \vb} \|\vX^{\top}\vX -\vY^{\top}\vY \|_F^2.
\end{eqnarray}
However, this form requires knowledge of the source bounds, contrary to our setting. Instead, we aim to extract whitened sources up to scale factors, which are defined as
$\tilde{\vs}_t=\bPi_\vb^{-1}\bar{\vs}_t$,
where $\bPi= \text{diag}(b^{(1)}, b^{(2)}, \ldots, b^{(d)})$, and the resulting scaled sources satisfy $-\mathbf{1} \leq \tilde{\vs} \leq \mathbf{1}$. Therefore, we restrict outputs to a fixed range, such as the unity $\ell_\infty$-norm ball defined as
$\mathcal{B}_\infty=\{\vy : \|\vy\|_\infty\le 1 \}$.

In light of these arguments, we pose the following optimization problem, Bounded Similarity Matching (BSM), for bounded source separation:  
\begin{eqnarray}
\underset{\vY, D_{11}, \ldots D_{dd}} {\text{minimize}}&& \sum_{i=1}^d D_{ii}^2  \hspace{1.5in}  \text{\it BSM} \nonumber\\
\text{subject to} &&  \vX^{\top}\vX -\vY^{\top}\vD\vY=0  \nonumber \\
&&-\mathbf{1} \le \vY \le \mathbf{1} \nonumber \\
&& \vD=\text{diag}(D_{11}, \ldots, D_{dd}) \nonumber \\
&& D_{11}, D_{22}, \ldots D_{dd}>0 \nonumber
\end{eqnarray}
Note that the similarity metric of BSM is different than that of NSM. In order to confine outputs to a fixed dynamic range, we propose  to match their weighted inner products to the inner products of inputs. This enables imposing boundedness of sources without knowing exact bounds. 
Estimation of individual bounds is achieved by optimizing over the inner product weights $\vD$.

%

The following theorem asserts that the global minima  of BSM are perfect separators:
\begin{theorem} 
	Given the BBSS setup described in Section \ref{sec:BSSseting}, if  the vector sequence $\{\tilde{s}_t\}$ contains all corner points of $\mathcal{B}_\infty$, then the global optima for  BSM
	satisfy
	\begin{eqnarray}
	\vY&=&\bLambd \vP \vD^{-1/2}\bar{\vS}=\bLambd \vP\tilde{\vS}, \\
	D_{ii}&=&b_{J_i}^2, \hspace{0.3in} i=1,\ldots d,
	\end{eqnarray}
	where $\bLambd$ is a diagonal matrix with $\pm 1$'s on the diagonal (representing sign ambiguity),  $\vP$ is a permutation matrix, $J$ is a permutation of $\{1,2, \ldots, d\}$.
\end{theorem}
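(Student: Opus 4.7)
The plan is to parametrize the feasible set, eliminate the weights $\vD$ by monotonicity of the objective, and reduce the problem to a sharp inequality on the orthogonal group.

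Because $\bTheta$ is orthogonal, the equality constraint reads $\vY^{\top}\vD\vY = \bar{\vS}^{\top}\bar{\vS}$, and the corner assumption forces $\bar{\vS}$ to have rank $d$. Hence every feasible $\vY$ admits the representation $\vD^{1/2}\vY = \vR\bar{\vS}$ for some real orthogonal $\vR$, i.e. $\vY = \vD^{-1/2}\vR\bPi\tilde{\vS}$. Since $\vy_t$ is a linear image of $\tilde{\vs}_t$ and the sequence $\{\tilde{\vs}_t\}$ contains every vertex of $\mathcal{B}_\infty$, the box constraint $-\mathbf{1}\le\vY\le\mathbf{1}$ is equivalent to a row-wise weighted $\ell_1$ bound on $\vD^{-1/2}\vR\bPi$, namely
\[
D_{ii}\;\ge\;\alpha_i^2,\qquad \alpha_i\;:=\;\sum_{j=1}^d |R_{ij}|\,b^{(j)}.
\]
Monotonicity of $\sum_i D_{ii}^2$ in each $D_{ii}$ then forces these inequalities to be saturated at an optimum, reducing BSM to $\min_{\vR\text{ orthogonal}}\sum_i\alpha_i^4$.

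The heart of the proof, and the main obstacle, is the sharp inequality
\[
\sum_{i=1}^d\Bigl(\sum_{j=1}^d |R_{ij}|\,b^{(j)}\Bigr)^{\!4}\;\ge\;\sum_{j=1}^d (b^{(j)})^4,
\]
with equality iff $\vR$ is a signed permutation. A natural first step is the identity $\sum_i\alpha_i^2 = \sum_j (b^{(j)})^2 + \sum_{j\ne k} b^{(j)}b^{(k)}\sum_i |R_{ij}||R_{ik}|$, which already yields $\sum_i\alpha_i^2\ge\sum_j(b^{(j)})^2$ with equality precisely when the columns of $|\vR|$ have pairwise disjoint supports, equivalently when $\vR$ is a signed permutation. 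Lifting this second-moment statement to the required fourth-moment statement is the real work: Cauchy--Schwarz or power-mean give only the weaker bound $(\sum_j(b^{(j)})^2)^2/d$. I expect to need a majorization refinement adapted to the orthogonality of $\vR$, or a Lagrangian/KKT analysis on the orthogonal group that uses the subgradient structure of $|R_{ij}|$ together with skew-symmetric tangent perturbations to rule out any non-permutation critical point.

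Once the key inequality is established, the equality case gives $\vR = \bLambd\vP$, hence $\alpha_i = b^{(J_i)}$, $D_{ii} = (b^{(J_i)})^2$, and $\vY = \vD^{-1/2}\bLambd\vP\bar{\vS} = \bLambd\vP\tilde{\vS}$, matching the claimed form of the global optima.
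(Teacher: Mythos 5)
Your reduction is sound and, up to the point where you stop, it is actually more careful than the paper's own argument: the Gram factorization $\vD^{1/2}\vY=\vR\bar{\vS}$, the observation that the corner assumption turns the box constraint into $D_{ii}\ge\alpha_i^2$ with $\alpha_i=\sum_j|R_{ij}|b^{(j)}$, and the saturation $D_{ii}=\alpha_i^2$ all match the appendix. The genuine gap is exactly where you locate it, and it cannot be closed: the fourth-moment inequality $\sum_i\alpha_i^4\ge\sum_j(b^{(j)})^4$ is false for heterogeneous bounds. Take $d=2$, $b=(6,1)$ and $\vR$ the $45^{\circ}$ rotation; then $\alpha_1=\alpha_2=7/\sqrt{2}$ and $\sum_i\alpha_i^4=2401/2=1200.5<1297=6^4+1^4$, and one checks directly that the corresponding $(\vY,\vD)$ with $D_{11}=D_{22}=49/2$ is feasible for BSM. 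So a non-permutation point beats every signed permutation, and no majorization or KKT refinement on the orthogonal group will rescue the claim; your suspicion that Cauchy--Schwarz and power-mean only give weaker bounds was the right instinct.

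The paper's appendix evades this only through an algebra slip: from $\vD^{1/2}\vy_t=\bPhi\tilde{\vs}_t$ the corner maximization gives $D_{ii}^{1/2}\max_t y^{(i)}_t=\|\bPhi_{i,:}\|_1$, but the appendix writes $D_{ii}$ in place of $D_{ii}^{1/2}$, which replaces the needed fourth-moment inequality by your second-moment identity $\sum_i\alpha_i^2\ge\sum_j(b^{(j)})^2$ (true, with equality iff $\vR$ is a signed permutation) --- at the price of concluding $D_{ii}=b_{J_i}$, which contradicts both the theorem statement and the equality constraint. Statement and proof become consistent, and your two-line second-moment lemma finishes the job, if the BSM objective is $\sum_i D_{ii}$ rather than $\sum_i D_{ii}^2$; as written, the theorem itself fails once the $b^{(j)}$ are sufficiently spread, though it does hold when all $b^{(j)}$ equal a common $b$, since then $\alpha_i\ge b\,\|\vR_{i,:}\|_2=b$ row by row.
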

\begin{proof}
	The proof is  in Appendix \ref{ap:proofofthm}.
\end{proof}

\section{Adaptive BSM via a Neural Network with local learning rules}
In this section, we derive an  adaptive implementation  of BSM and  show that it corresponds to a biologically plausible neural network with local update rules. For this purpose, our first task is to introduce an online version of the  BSM optimization problem introduced in the previous section. 

In the online setting, we consider exponential weighting of the signals  for dynamical adjustment and define  the weighted  input data  snapshot matrix  by time $t$ as,
	\begin{eqnarray}
	\bwX_t&=&\left[\begin{array}{ccccc}  \gamma^{t-1} \vwx_1 & \gamma^{t-2} \vwx_2 & \ldots & \gamma \vwx_{t-1} & \vwx_t \end{array}\right]\nonumber \\
	&=& \bX_t\bGam_t,
	\end{eqnarray}
	where $\bGam_t=\text{diag}( \gamma^{t-1}, \gamma^{t-2}, \ldots, 1)$. Similarly, we define the weighted output snapshot matrix by time $t$ as,
	\begin{eqnarray}
	\bwY_t&=&\left[\begin{array}{ccccc}  \gamma^{t-1} \vwy_1 & \gamma^{t-2} \vwy_2 & \ldots & \gamma \vwy_{t-1} & \vwy_t \end{array}\right] \nonumber\\
	&=& \bY_t\bGam_t
	\end{eqnarray}	
	We  define $\kappa_t= \sum_{k=0}^{t-1}\gamma^{2k}=\frac{1-\gamma^{2t}}{1-\gamma^{2}}$ as a measure of  effective time window length for sample correlation calculations, given the exponential weights. Assuming  sufficiently large $t$, we take $\kappa=\frac{1}{1-\gamma^2} $.

In order to define on online BSM cost function, we replace the hard equality constraint on BSM with a square error minimization, and introduce the weighted similarity matching (WSM) cost function as,
\begin{eqnarray}
J_{WSM}(\bwY_t,\vD_t)=\frac{1}{\kappa^2}\|\bwX_t^T\bwX_t-\bwY_t^T\vD_t\bwY_t\|_F^2.
\end{eqnarray}
With this definition, we consider a relaxation of {\it BSM} (rBSM):
\begin{eqnarray}
J_{rBSM}(\bwY_t,\vD_t)=J_{WSM}(\bwY_t,\vD_t)+2\alpha_D \sum_{i=1}^d D_{t,ii}^2,
\end{eqnarray}
to be minimized under the constraint set $-1\le \bwY_t\le 1$.

%
%

Following a treatment similar to \cite{pehlevan2017blind}, we derive an online cost from $J_{rBSM}$ by considering its optimization with respect to the data already received and only with respect to the current output, $\vy_t$, and weights, $\vD_t$. This reduces to minimizing,
\begin{align}
	h(\vy_t,\vD_t)= &\kappa\text{Tr}(\vM_t\vD_t\vM_t\vD_t)-2\kappa\text{Tr}(\vW_t^T\vD_t\vW_t) \nonumber \\
	&\hspace*{-0.65in} +2\vy_t^T\vD\vM_t\vD_t\vy_t-4\vy_t^T\vD_t\vW_t\vx_t+2\alpha_D \sum_{i=1}^d D_{t,ii}^2,
\end{align}
where
\begin{eqnarray}\label{weights}
\vW_t=\frac{1}{\kappa} \bwY_{t-1}\bwX_{t-1}^T=\frac{1}{\kappa}\sum_{k=1}^{t-1}(\gamma^2)^{t-1-k}\vy_k\vx_{k}^T, \nonumber \\
\vM_t=\frac{1}{\kappa} \bwY_{t-1}\bwY_{t-1}^T=\frac{1}{\kappa}\sum_{k=1}^{t-1}(\gamma^2)^{t-1-k}\vy_k\vy_{k}^T,  
\end{eqnarray}
subject to $-1\leq \vy_t \leq 1$.

In order to minimize this cost function, we write its gradient with respect to $\vy_t$ as
\begin{eqnarray}
\frac{1}{4}\nabla_{\vy_t}h(\vy_t,\vD_t)=\vD_t\vM_t\vD\vy_t-\vD_t\vW_t\vx_t,
\end{eqnarray}
and the gradient with respect to $D_{t,ii}$ as
\begin{align}\label{dgrad}
\frac{1}{4}\nabla_{D_{t,ii}}h(\vy_t,\vD_t)=&\frac{\kappa}{2}(\|{\vM_{t+1}}_{i,:}\|_\vD^2-\|{\vW_{t+1}}_{i,:}\|_2^2)\nonumber \\&+\alpha_D D_{t,ii}
\end{align}
Note that, as $\vD_t>0$,  
\begin{eqnarray}
-\vD_t^{-1} \frac{1}{4}\nabla_{\vy_t}h(\vy_t,\vD_t)=\vW_t\vx_t-\vM_t\vD_t\vy_t \label{descentdir}
\end{eqnarray}
provides a local descent direction with respect to $\vy_t$. We decompose $\vM_t=\bar{\vM}_t+\Upsilon_t$ where $\Upsilon_t$ is the diagonal, nonnegative matrix containing diagonal components of $\vM_t$. 
As a result the the last term  in the descent direction can be rewritten as %
$\bar{\vM}_t\vD\vy_t+\Upsilon_t\vD_t\vy_t$.
We further define $\vu=\Upsilon_t\vD_t\vy_t$. Since $\vu$ and $\vy_t$ are monotonically related, (\ref{descentdir}) is a descent direction with respect to $\vu$. Using a gradient search with small step size,  we obtain a  neural network form for the gradient search algorithm
\begin{align}\label{dynamics}
\frac{d\vu(\tau)}{d\tau}&=-\vu(\tau)+\vW_t \vx_t-\bar{\vM}_t\vD_t\vy_t(\tau), \nonumber \\
\vy_{t,i}(\tau)&=c\left(\frac{\vu_i(\tau)}{{\Upsilon_{t,ii}}{\vD_{t,ii}}} \right),
\end{align}
where $c$ is the clipping function corresponding to the projection of output components  to the constraint set $[-1,1]$, which can be written as 
\begin{eqnarray}\label{clipfun}
c(z)=\left\{ \begin{array}{cc} z &  -1\le z \le 1, \\
{\rm sign}(z) & \text{otherwise.} \end{array}  \right.
\end{eqnarray}
The corresponding recurrent neural network is shown in Fig. \ref{fig:RNNBSM}.
It is interesting to observe that the  inverse of the inner product weights act as activation function gains.

After the neural network dynamics of \eqref{dynamics} converges, synaptic and inner product weights are updated for the next input by
\begin{align}
\vW_{t+1}&=\gamma^2 \vW_t + (1-\gamma^2)\vy_t\vx_t^T \nonumber\\
\vM_{t+1}&=\gamma^2 \vM_t+(1-\gamma^2)\vy_t\vy_t^T \nonumber\\
{\vD_{t+1,ii}}&=(1-\beta) {\vD_{t,ii}} +\eta (\|{\vW_{t+1}}_{i,:}\|_2^2-\|{\vM_{t+1}}_{i,:}\|_{\vD_t^2}), \nonumber
\end{align}
where $\eta$ is the step size and $\beta=2\eta\alpha_D$.
The synaptic weight updates follow from their definitions in \eqref{weights}. In neuroscience, $\vW$ updates are called Hebbian, and $\vM$ updates are called anti-Hebbian (because of the minus sign in \eqref{dynamics}) \cite{foldiak1990forming,dayan2005theoretical}.  The gain updates turn out to be in the form of a leaky integrator due to the last term in \eqref{dgrad}.
It is interesting to observe that these gain updates are negative or positive depending on the balance between the  norms of feedforward and recurrent weights. These weights are the reflectors of the recent excitation and inhibition statistics respectively (as they are corresponding correlation matrices). Relative  increase in excitation(inhibition) would cause increase(decrease) in weights, and therefore,   decrease(increase) in the corresponding homeostatic gains. This resembles the experimentally observed homeostatic gain adjustment behavior in  biological neurons \cite{turrigiano1999homeostatic}.

\begin{figure}
	\begin{center}
		\begin{tikzpicture}[thick, scale=0.38]
		\clip (0,0) rectangle + (14,13);
		\coordinate (A) at (10,11);
		
		\coordinate (S11)   at  (10.88,10.1) ;
		\coordinate (S12) at (11,8) ;
		
		\coordinate (S21) at (11.02,10.9);
		\coordinate (S22) at (10.9,8.95);
		
		\coordinate (SM1) at (11.02,3);
		\coordinate (S1M) at (10.92,11.67);
		
		\coordinate (SM2) at (10.88,3.70);
		\coordinate (S2M) at (10.95,7.72);

		\node [left] at (2.5,10.5) {\LARGE $x_1$};
		\filldraw[thick] (2.5,10.5) circle (0.05cm);
		
		\node [left] at (2.5,7.5) {\LARGE $x_2$};
		\filldraw[thick] (2.5,7.5) circle (0.05cm);

		\node [left] at (2.5,3.5) {\LARGE $x_n$};
		\filldraw[thick] (2.5,3.5) circle (0.05cm);
		\draw[thick, dotted] (2,6)--(2,5);

		\draw[thick] (A) circle (1.0cm);
		\draw[<->, line width=0.4mm] (9.1,11) -- (10.9,11);
		\draw[<->, line width=0.4mm] (10,10.2) -- (10,11.8);
		\draw[-, line width={\CLW},blue] (9.2,10.6) -- (9.65,10.6)-- (10.4,11.4)--(10.8,11.4);
		
		\draw[thick] (10,8) circle (1.0cm);
		\draw[<->, line width=0.4mm] (9.1,8) -- (10.9,8);
		\draw[<->, line width=0.4mm] (10,7.2) -- (10,8.8);
		\draw[-, line width={\CLW},blue] (9.2,7.6) -- (9.65,7.6)-- (10.4,8.4)--(10.8,8.4);
		\draw[thick] (10,3) circle (1.0cm);
		\draw[<->, line width=0.4mm] (9.1,3) -- (10.9,3);
		\draw[<->, line width=0.4mm] (10,2.2) -- (10,3.8);
		\draw[-, line width={\CLW},blue] (9.2,2.6) -- (9.65,2.6)-- (10.4,3.4)--(10.8,3.4);
		
		\draw[thick, dotted] (10,6.3)--(10,4.7);
		
		\draw [ line width={\RSLW}, WildStrawberry]   (S11) to[out=-30,in=3] (S12);
		\draw [ line width={\RSLW},WildStrawberry]   (S21) to[out=-5,in=50] (S22);
		\draw [ line width={\RSLW}, WildStrawberry]   (SM1) to[out=2,in=50] (S1M);
		\draw [ line width={\RSLW}, WildStrawberry]   (S2M) to[out=-30,in=50] (SM2);
		\filldraw[thick,WildStrawberry] (10.94,3.74) circle (0.15cm);
		\filldraw[thick,WildStrawberry] (10.82,8.85) circle (0.15cm);
		\filldraw[thick,WildStrawberry] (10.80,10.14) circle (0.15cm);
		\filldraw[thick,WildStrawberry] (10.94,11.72) circle (0.15cm);
		
		\draw[-, line width={\FSLW}] (2.5,10.5) -- (8.69,10.93);
		\draw[-, line width={\FSLW}] (2.5,10.5) -- (8.75,8.51);
		\draw[-, line width={\FSLW}] (2.5,10.5) -- (9.09,3.95);
		
		\draw[-, line width={\FSLW}] (2.5,7.5) -- (8.8,10.45);
		\draw[-, line width={\FSLW}] (2.5,7.5) -- (8.69,7.95);
		\draw[-, line width={\FSLW}] (2.5,7.5) -- (8.80,3.53);
		
		\draw[-, line width={\FSLW}] (2.52,3.5) -- (9.13,9.99);
		\draw[-, line width={\FSLW}] (2.52,3.5) -- (8.9,7.27);
		\draw[-, line width={\FSLW}] (2.52,3.5) -- (8.68,3.03);
		
		\draw[thick] (8.83,10.95) circle (0.15cm);
		\draw[thick] (8.92,8.45) circle (0.15cm);
		\draw[thick] (9.2,3.85) circle (0.15cm);
		
		\draw[thick] (8.92,10.52) circle (0.15cm);
		\draw[thick] (8.83,7.95) circle (0.15cm);
		\draw[thick] (8.92,3.45) circle (0.15cm);
		
		\draw[thick] (9.25,10.1) circle (0.15cm);
		\draw[thick] (9.02,7.35) circle (0.15cm);
		\draw[thick] (8.83,3.) circle (0.15cm);

		\draw[thick] (3,1.45) circle (0.15cm);
		\node [right] at (3.3,1.45) {\large $\text{Hebbian}$};
		\filldraw[thick,WildStrawberry] (3,0.45) circle (0.15cm);
		\node [right] at (3.3,0.45) {\large $\text{Anti-Hebbian}$};
		
	    \filldraw[draw=black,color=white,opacity=1.0] (8,8) rectangle (4,6);
		\node [left] at (7.5,7) {\huge $\mathbf{W}$};
	    \filldraw[draw=black,color=white,opacity=1.0] (15,8.3) rectangle (11.5,5.8);
		\node [right,WildStrawberry] at (10.1,7) {\huge $\mathbf{-\bar{M}}$};
		\end{tikzpicture}
	\end{center}
	\caption{Recurrent Neural Network for BSM.}
	\label{fig:RNNBSM}
\end{figure}
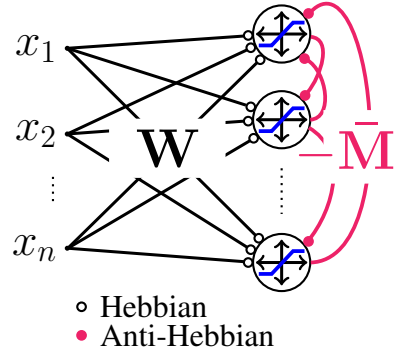

\ifsparse 
\subsection{Sparse Bounded Weighted Similarity Matching}
The neural network obtained in the previous section, which captures the limited range firing dynamics of the actual neural networks, can be modified to 
obtain sparse firing dynamics, besides the limitations on the maximum firing rate. For this purpose, we add a $ \ell_1$-norm regularizer to the online cost function, and obtain
\begin{eqnarray*}
	h(\vy_t)&=& \tau\text{Tr}(\vM(t)\vD\vM(t)\vD)-2\tau\text{Tr}(\vW(t)^T\vD\vW(t)) \\
	&& \hspace*{-0.35in}+2\vy_t^T\vD\vM(t)\vD\vy_t-4\vy_t^T\vD\vW(t)\vx_t+\lambda\|\vy_t\|_1.
\end{eqnarray*}
We also consider the generalization of the dynamic range for outputs to $T$.
This would result in similar network dynamics, where the only change is the replacement of the clipping function with the activation function $a(z)$:
\begin{eqnarray}
\frac{d\vu(t)}{dt}&=&-\vu(t)+\vW(t) \vx(t)-\bar{\vM}(t)\vD\vy(t)\\
\vy_i(t)&=&a_{T,\lambda}\left(\frac{\vu_i(t)}{\Upsilon_{ii}(t)\vD_{ii}(t)} \right)
\end{eqnarray}
where 
\begin{eqnarray}
a_{T,\lambda}(z)=\left\{ \begin{array}{cc} T &  z \ge T+\lambda, \\
z-\lambda & \lambda\le z <T+\lambda, \\
z+\lambda & -(T+\lambda)\le z <-\lambda,\\
-T  & z\le -(T+\lambda) ,\\
0  & \text{otherwise.} \end{array}  \right.
\end{eqnarray}
The form of the activation function is shown in Figure \ref{fig:SparseBSM}, which is a clipped soft-thresholding  function. The choice of the thresholding parameter $T$ in conjunction with the parameter $\lambda$  would impact the relative sparseness level.

\begin{figure}
	\begin{center}
		\begin{tikzpicture}[thick, scale=0.48]
		\draw[<->,line width={0.3mm}] (-5,0) -- (5,0) node[right] {\large$z$};
		\draw[<->,line width={0.3mm}] (0,-3) -- (0,3) node[above] {\large$a(z)$};
		\draw[-, line width={\CLW},blue] (-4.9,-2) -- (-3.6,-2)-- (-1.6,0)--(1.6,0)--(3.6, 2)--(4.9,2);
		\draw[thick, dotted] (-3.6,-2)--(-3.6,0);
		\node [above] at (-1.7,0) {\small $-\lambda$};
		\node [below] at (1.5,0) {\small $\lambda$};
		\node [above] at (-3.6,0) {\small $-\lambda-T$};
		\draw[thick, dotted] (-3.6,-2)--(0,-2);
		\node [right] at (0,-2) {\small $-T$};
		\node [left] at (0,2) {\small $T$};
		\draw[thick, dotted] (3.6,2)--(0,2);
		\draw[thick, dotted] (3.6,2)--(3.6,0);
		\node [below] at (3.6,0) {\small $\lambda+T$};
		\end{tikzpicture}
	\end{center}
	\caption{Activation function for Sparse BSM Neural Network.}
	\label{fig:SparseBSM}
\end{figure}
\fi

\section{Numerical Examples}
\subsection{Source Separation Example}
As an illustration of the bounded source separation capability of the recurrent BSM network, we performed the following numerical experiment:
	We generated $10$ uniform sources from $\mathcal{U}[0,B]$ where the maximum value $B$ is chosen randomly for each choice from $[2,7]$ interval randomly (uniformly chosen). The sources were mixed by a random real orthogonal matrix, representing the whitened mixtures case. 
	We selected the synapse update parameter  $1-\gamma^2=4\times10^{-3}$, the homeostatic gain update parameter as $\eta= 10^{-3}$ and the leaky integral parameter for homeostatic gain parameter as $\beta=10^{-6}$.

Fig. \ref{fig:example10sources}.(a) shows the Signal-to-Interference Ratio energy (SIR), which is the total energy of source signals at the outputs to the energy of interference from other sources. The SIR increases with iterations after an initial transient and the clipping input's peak converges towards the clipping level.  Homeostatic gains settle, as can be seen in both weight convergence curves in Fig. \ref{fig:example10sources}.(c) and the excitation/inhibition curves in Fig. \ref{fig:example10sources}.(d).

\begin{figure}
	\centering
	\includegraphics[width=0.99\linewidth]{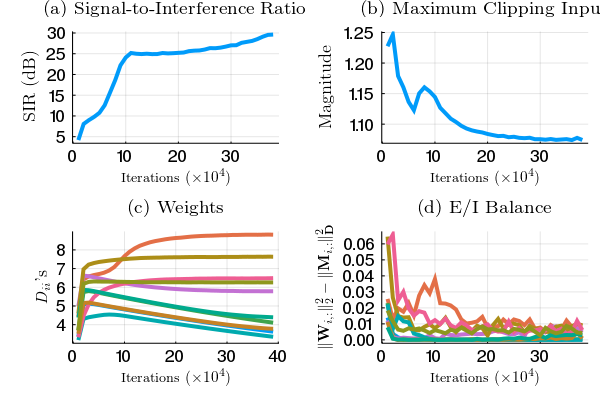}
	\caption{ $10$ Uniform Source Example:   (a) Signal-to-Interference Ratio (SIR) for outputs, (b) the maximum magnitude input (among all neurons) to the clipping function, (c) Weights($D_{ii}$'s), (d) E/I balance for each neuron as measured based on excitation and inhibition synaptic strength norms.}
	\label{fig:example10sources}
\end{figure}

\ifsparse
\subsection{Sparse BSM Application on Image Patches}
In this example, we investigate the application of BSM recurrent neural network on the natural image patches. For this purpose, we used natural images provided in \cite{Olshaussen:images}:
\begin{itemize}
	\item We cropped overlapping  $12 \times 12$ patches with $4$ pixel shifts (both horizontal and vertical) from the raw (unwhitened) images. We also included the transpose of each patch to generate rotational variance in the input data. 
	\item The resulting image patches are then mean removed and whitened through inverse symmetric square root of their covariance matrix.
	\item  We used these images to train the recurrent neural network with clipped soft thresholding activation function with clipping parameter $T=0.8$ and the $\ell_1$ norm regularization  parameter $\lambda=0.5$. 
	\item The  receptive fields are obtained as the  rows of the $\vW$ matrix.
\end{itemize}

\begin{figure}
	\centering 
	\includegraphics[trim=13cm 7cm 12cm 15cm,scale=0.095]{SparseBSMImagePatches}
	\caption{Receptive fields for the Sparse BSM recurrent neural network for $12\times12$ whitened  image patches obtained from natural images.}
	\label{fig:sparsebsmimagepatches}
\end{figure}
\fi

\subsection{Image Separation}
We consider the problem of separating 3 images selected from the database in \cite{martin2002database} from their $3$  random unitary mixtures. As illustrated in Fig. \ref{fig:exampleimagesep}, we obtain outputs that are very similar to original nearly uncorrelated sources, with an SIR level around $30$dB.
\begin{figure}
	\centering
	\includegraphics[width=0.99\linewidth]{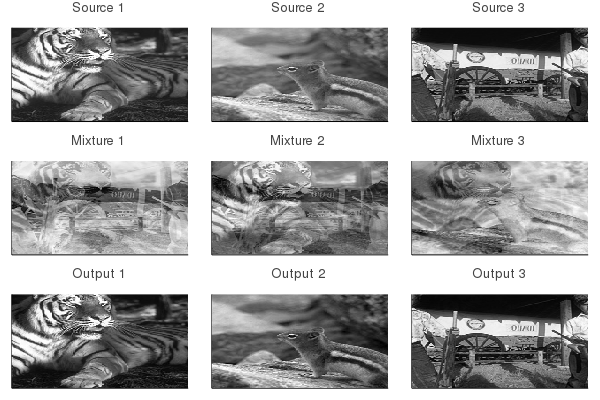}
	\caption{Image Separation Example.}
	\label{fig:exampleimagesep}
\end{figure}

\begin{appendices}
	
	\section{Proof of Theorem I}
	\label{ap:proofofthm}
	We start by observing that, since $\vx_t=\bTheta\bar{\vs}_t$ and $\bTheta$ is real orthogonal, the equality constraint of {\it BSM} is equivalent to
	$\bar{\vS}^T\bar{\vS}-\vY^T\vD\vY=0$,
	where $\bar{\vS}$ contains the sequence $\bar{\vs}_t$ in its columns. Referring to Theorem I in \cite{pehlevan2015normative}, this equality constraint implies
	\begin{eqnarray}
	\vD^{1/2}\vy_t&=&\vG \bar{\vs}_t= \underbrace{\vG\bPi_\vb}_{\bPhi}\tilde{\vs}_t.
	\end{eqnarray}
	The $i^{th}$ component of left hand side is maximized for index $m$ where $\vs_m=\text{sign}(\bPhi_{i,:})^T$, i.e.,
	\begin{eqnarray}
	D_{ii}\max_t(y^{(i)}_t)&=&\bPhi_{i,:}\text{sign}(\bPhi_{i,:})^T =  \|\bPhi_{i,:}\|_1
	\end{eqnarray}
	As assumed in Theorem 1, since the vector sequence $\{\tilde{s}_t\}$ contains the corners of $\mathcal{B}_\infty$, which are all possible sign patterns, we can write 
	\begin{eqnarray}
	D_{ii}=\frac{\|\bPhi_{i,:}\|_1}{\max_t(y^{(i)}_t)}, \hspace{0.5in} i=1, \ldots, d. 
	\end{eqnarray}
	In terms of this expression, we can rewrite the cost function for {\it BSM}, and obtain a lower bound:
	\begin{eqnarray}
	\sum_{i=1}^d D_{ii}^2&=&\sum_{i=1}^d \frac{\|\bPhi_{i,:}\|_1^2}{\max_t(y^{(i)}_t)^2}\\
	&\ge& \sum_{i=1}^d \|\bPhi_{i,:}\|_2^2   \label{eq:boundineq}\\
	&=&   \sum_{i=1}^d (b^{(i)})^2\|\vG(:,i)\|_2^2 = \sum _{i=1}^d(b^{(i)})^2,
	\end{eqnarray}
	where  the inequality is due to $\max_t(y^{(i)}_t)\le 1$ and  $\ell_p$-norm inequality, and the last equality is due to the fact that $\vG$ is real orthogonal. The lower bound is achieved if and only if the  inequality in (\ref{eq:boundineq}) is equality. This condition is achieved if all rows of $\bPhi$, and therefore, $\vG$ have only one nonzero element. Therefore, for the optimal solution,
	$\vG$ has the form
	$\vG=\bLambd\vP$,
	where $\bLambd$ and $\vP$ are as defined in the theorem. This implies that optimal $D_{ii}$s are equal to  permuted versions of $b_i$s. As a result, the global optimal solutions to {\it BSM}  can be characterized by the statement of the theorem.
	

\end{appendices}
\newpage
\bibliographystyle{IEEEbib}

\end{document}